\documentclass[runningheads,orivec]{llncs}
\usepackage[T1]{fontenc}
\usepackage{graphicx}
\usepackage[utf8]{inputenc}
\usepackage{booktabs}
\usepackage{algorithm}
\usepackage{algorithmic}
\usepackage{url}
\usepackage{esvect}
\usepackage[hidelinks]{hyperref}
\usepackage{mathtools}

\usepackage{subcaption}

\usepackage{amsmath}
\usepackage{amssymb}
\usepackage{xcolor}

\let\vec\boldsymbol
\AtEndEnvironment{proof}{\qed}
\title{Lexicographic Social Ranking on Monotonic Coalitional Rankings}

\author{Felix Fritz\inst{1}\orcidID{0009-0002-7448-8005} \and
Stefano Moretti\inst{1}\orcidID{0000-0003-3627-3257}}

\authorrunning{F. Fritz \and S. Moretti}
\institute{
LAMSADE, CNRS, Universit\'e Paris-Dauphine, Universit\'e PSL, 75016 Paris, France\\
\email{\{felix.fritz, stefano.moretti\}@dauphine.psl.eu}
}

\newcommand\PS{{2^N}}
\newcommand\card[1]{{\left|#1\right|}}
\newcommand\rank{{\boldsymbol{\mathcal{R}}}}
\newcommand\lex{{\mathrm{lex}}}
\newcommand\LOne{{L^{(1)}}}
\newcommand\MWC{\mathfrak{M}}

\newcommand{\RRlex}[1][\succsim]{\mathrel{R}^{#1}_{\lex}}
\newcommand{\RPlex}[1][\succsim]{\mathrel{P}^{#1}_{\lex}}
\newcommand{\RIlex}[1][\succsim]{\mathrel{I}^{#1}_{\lex}}

\newcommand{\RRlone}[1][\succsim]{\mathrel{R}^{#1}_{\LOne}}
\newcommand{\RPlone}[1][\succsim]{\mathrel{P}^{#1}_{\LOne}}
\newcommand{\RIlone}[1][\succsim]{\mathrel{I}^{#1}_{\LOne}}

\begin{document}

\maketitle

\begin{abstract}
Recent studies on social rankings in coalitional settings have introduced methods that rank individuals by lexicographically comparing vectors of their occurrences across coalitions ordered according to their strength.
In this work, we focus on two such solutions: the lexicographical excellence (lex-cel) solution, which disregards coalition size, and the $\LOne$ solution, which additionally prioritizes smaller coalitions through a double lexicographic comparison.
We investigate the combinatorial connections between these two solutions on monotonic coalitional rankings, where equivalence classes are compactly represented through sets of minimal (with respect to set inclusion) coalitions.
After introducing general formulas for computing the lex-cel and $\LOne$ parameter vectors from these minimal coalitions, we also present worst-case running time results. Finally, to further explore the behavior of the two solutions through simulations designed to assess the distance of the rankings they produce, we show that the monotonicity assumption does not lead to actual redundancy in the rankings produced by the two solutions.
\end{abstract}

\section{Introduction}\label{ch:intro}

The problem of evaluating the importance of individuals based on the performance of the groups they belong to has received increasing attention in the literature on social ranking.
In this framework, a ranking over coalitions of a finite set $N$ of players induces a ranking over the players themselves, reflecting their relative contribution to the coalitions in which they appear.
This problem arises naturally in settings such as voting systems, collaborative research, sports teams, and committee evaluations.
Several social ranking solutions have been proposed and axiomatically characterized, including 
the lexicographic excellence solution (lex-cel)~\cite{bernardi}, and the $\LOne$ solution~\cite{algaba}; see \cite{survey} for a survey on the related literature.

Lex-cel and \(\LOne\) are particularly closely related.
Both evaluate players according to their occurrences in the equivalence classes of a coalitional ranking, proceeding lexicographically from the best coalitions to the worst ones.
While lex-cel only counts the occurrences of players, \(\LOne\) additionally distinguishes coalitions by their cardinalities, giving priority to smaller coalitions within the same equivalence class.

The two solutions also share a substantial axiomatic foundation.
In particular, both satisfy \emph{desirability}, requiring that a player who contributes at least as much as another player to every coalition should not be ranked below them, \emph{independence from the worst set}, stating that refinements of the lowest equivalence class should not reverse an already strict preference relation, and \emph{consistency after indifference}~\cite{desirability}.
At the same time, the two solutions differ in how they treat coalition cardinalities.
While lex-cel is characterized through coalitional anonymity, \(\LOne\) satisfies stronger cardinality-sensitive principles such as \emph{per-size coalitional anonymity} and \emph{\(k\)-desirability on dichotomous rankings}, which explicitly account for the sizes of the coalitions involved.

This strong overlap in their axiomatic structure raises the question of whether the two solutions become more closely aligned on restricted domains of coalitional rankings.
In particular, monotonic rankings impose strong structural constraints on coalitions through inclusion relations, potentially amplifying the common consequences of the axioms shared by both solutions. A coalitional ranking is monotonic if every superset of a coalition is ranked at least as highly as the coalition itself.
Such rankings naturally generalize the notion of {\emph simple games} \cite{owen2013game,alonso2013review} from cooperative game theory and can be represented through the minimal coalitions that reach a given equivalence class.
Monotonicity is a classical property and has been widely studied in the field of coalitional games to understand when classical solutions such as the Shapley value, the Banzhaf value and other semivalues share the same ranking  \cite{carreras2008ordinal,freixas2010ordinal,saari2001some}. More recently, monotonic linear orders over coalitions have been investigated in \cite{GOURVES2026109656} as a condition aimed at simplifying the computation of the ranking provided by the lex-cel (it is easy to show that, on this domain, lex-cel and \(\LOne\) coincide).

A major obstacle in applying social ranking solutions is that the number of coalitions grows exponentially with the number of players.
Recent work has therefore studied ways of reducing the amount of information required to determine social rankings.
Suzuki and Horita~\cite{Suzuki2024}, for instance, investigate social ranking solutions under variable domains of coalitions, motivated by the observation that considering all possible coalitions is often computationally infeasible and practically unrealistic.
Ravier et al.~\cite{ravier2024} study elicitation procedures for identifying lex-cel necessary winners under incomplete knowledge.
In this paper, we take a complementary approach by restricting our attention to monotonic coalitional rankings.
This representation is also cognitively appealing: instead of comparing all coalitions, one may specify coalitions that are just sufficient, in the sense that removing any player makes them worse.
Related models with several levels of approval have been studied in the literature on weighted voting games, i.e., with (2, k) simple games~\cite{freixas2009}.

The role of minimal winning coalitions has been studied extensively in {\it voting power} \cite{owen2013game,alonso2013review}.
Classical power indices such as the Penrose--Banzhaf index \cite{penrose1952,banzhaf1964} and the Shapley--Shubik index \cite{shapley1954} are usually defined using all winning coalitions, but can also be computed from the set of minimal winning coalitions \cite{kirsch2008}.
Similarly, Stach~\cite{Stach2022-ea} reformulates the Public Help index \cite{bertini2008} using null-player-free winning coalitions.

Since monotonic coalitional rankings extend simple games from two classes to several ordered equivalence classes, this suggests asking whether analogous combinatorial reductions are possible for lex-cel and \(\LOne\).
We derive explicit inclusion--exclusion formulas for computing lex-cel and \(\LOne\) scores from the minimal winning coalitions of the upper equivalence classes of a monotonic ranking.
We first treat dichotomous rankings and then extend the formulas to arbitrary monotonic rankings by considering cumulative truncations into simple games.
These formulas also lead to optimization methods: for a fixed player, redundant coalitions can be removed through inclusion-minimal reductions, and for pairwise comparisons it suffices to compute score differences rather than absolute values.
We briefly touch upon their worst-case running times, which marginally improve the exponential growth of the original calculation method.

Finally, we investigate through random sampling whether the additional structure imposed by monotonicity leads lex-cel and $\LOne$ to induce the same ranking more often.
The general conclusion is that, while both solutions align in extreme coalition distributions, neither becomes redundant in general situations.

The remainder of the paper is organized as follows.
Section~\ref{ch:definitions} introduces the relevant notions of coalitional rankings, monotonicity, lex-cel, and \(\LOne\).
Section~\ref{ch:calculations} derives the formulas based on minimal winning coalitions.
Section~\ref{ch:optimization} presents the corresponding optimization methods.
Section~\ref{ch:simulations} reports the simulation results.
Section~\ref{ch:conclusion} concludes.

\section{Basic definitions and concepts}\label{ch:definitions}
In traditional social ranking solutions, elements are ranked based on the ranking of their groups.
Let $N = \{1, \dots, n\}$ be a set of elements and $\PS = \{S \subseteq N\}$ its power set, the set of subsets  or \emph{coalitions} of $N$.
The \emph{cardinality} $\card{S}$ denotes the number of elements in a given coalition $S\subseteq N$.
Here, $\card{N} = n$ and $\card{\PS} = 2^n$.

A \emph{binary relation} on $N$ is a set $R \subseteq N \times N$.
For any $i, j \in N$, an ordered pair $(i,j) \in R$ is usually denoted as $iRj$.
A binary relation $R$ is said to be \emph{transitive} if $iRj$ and $jRk$ implies $iRk$ for all $i,j,k \in N$.
A transitive and total binary relation on a finite set $N$ is called a \emph{total preorder} or a \emph{ranking} on $N$.
Likewise, a \emph{coalitional ranking} describes a total preorder on the power set $2^N$.
 A ranking that is also antisymmetric is called a \emph{linear order}.
The set of all rankings over a set $N$ is denoted by $\rank(N)$.
Consequently, $\succsim\ \in \rank(\PS)$ represents a ranking on all coalitions of elements in $N$.
In such a ranking, $\sim$ indicates its symmetric part ($S \sim T$ if $S \succsim T$ and $T \succsim S$), $\succ$ its asymmetric part ($S \succ T$ if $S \succsim T$ and not $T \succsim S$).
A ranking $\succsim$ may be interpreted as an ordinal representation of the relative strength of coalitions: for example, if the elements represent voters, $S \succ T$ denotes the fact that the coalition $S$ is strictly stronger than  $T$, e.g. because $S$ forms a majority in a house while $T$ does not.

For a given coalitional ranking $\succsim\ \in \rank(\PS)$ represented as $S_1 \succsim S_2 \succsim \dots \succsim S_{2^n}$, its quotient order is written as $\Sigma_1 \succ \Sigma_2 \succ \dots \succ \Sigma_m$.
Each equivalence class $\Sigma_k \subseteq 2^N$ contains coalitions that are all symmetric in $\succsim$.
This means that any coalition in $\Sigma_1$ is indifferent to $S_1$ and strictly better than those in $\Sigma_2, \Sigma_3$, etc.
Likewise, $S_{2^n} \in \Sigma_m$ belongs to the group of lowest ranked coalitions.
Finally, a \emph{social ranking solution}, or \emph{ranking solution}, is a function $R: \rank(\PS) \rightarrow \rank(N)$ yielding a ranking over the set of elements for each ranking over its power set.
A comprehensive survey on ranking solutions can be found in \cite{survey}.

This work extensively studies monotonic power relations.
A coalitional ranking $\succsim\ \in \rank(\PS)$ is \emph{monotonic} under set inclusion if $S \supseteq T$ implies that $S \succsim T$. Notice that a monotonic coalitional ranking $\succsim\ \in \rank(\PS)$ with only two equivalence classes $\Sigma_1 \succ \Sigma_2$
 induces a \emph{simple game} $(N,\Sigma_1)$ \cite{owen2013game}, where coalitions in $\Sigma_1$ are called \emph{winning coalitions} and those in $\Sigma_2$ \emph{losing}. An element of $\Sigma_1$ that is minimal with respect to set inclusion is called a \emph{minimal winning coalition}. These coalitions can be enumerated using algorithms from the literature \cite{MUSalgorithm,zhao2018computing}.

\subsection{The lexicographical excellence and \texorpdfstring{$\LOne$}{L1} solution}
The results in this paper build upon the lexicographical excellence solution \cite{bernardi} and the $\LOne$ solution \cite{algaba}.
Both of these solutions are not concerned with whom players form coalitions, but rather the positions in which these coalitions appear.

The lexicographical excellence solution, short \emph{lex-cel}, takes into account the total number of times a player appears in each equivalence class.
Given two vectors $\vec{i} = (i_1, \dots, i_m)$ and $\vec{j} = (j_1, \dots, j_m)$, we define a lexicographical order $\geq_L$ wherein $\vec{i} \geq_L \vec{j}$ if either $\vec{i} = \vec{j}$ or if there exists a $k$ such that $\vec{i}_k > \vec{j}_k$ and $\vec{i}_\ell = \vec{j}_\ell$ for all $\ell < k$.
In a ranking $\succsim\ \in \rank(\PS)$ with its corresponding quotient order $\Sigma_1 \succ \dots \succ \Sigma_m$, $\theta^{\succsim, i} = (\theta^{\succsim, i}_1, \dots, \theta^{\succsim, i}_m)$ is the $m$-directional vector for a player $i \in N$, also called their lex-cel score, such that each index $k$ describes the number of coalitions in $\Sigma_k$ that $i$ appears in,
\begin{equation}\label{eq:lex}
	\theta^{\succsim, i}_k = \card{\{S \in \Sigma_k: i \in S\}}.
\end{equation}

\begin{definition}\label{def:lexcel}
	The \emph{lexicographical excellence solution} (lex-cel) is the function $\RRlex[]: \rank(\PS) \rightarrow \rank(N)$ defined for any coalitional ranking $\succsim\ \in \rank(\PS)$ as
	\[
		i \RRlex j \quad \text{ if } \quad \theta^{\succsim, i} \geq_L \theta^{\succsim, j}.
	\]
We denote by $\RIlex$ the symmetric part of $\RRlex$ and by $\RPlex$ its asymmetric part.
\end{definition}

Although the lex-cel solution is not concerned with the types of coalitions players can form, the $\LOne$ solution does take into account the sizes of the coalitions.
More concretely, the $\LOne$ solution prioritizes a player's contribution to smaller coalitions over bigger ones.
To formalize this, consider two matrices $\mathcal{A}, \mathcal{B} \in \mathbb{R}^{n \times m}$ and an $\LOne$ order $\geq_\LOne$ such that $\mathcal{A} \geq_\LOne \mathcal{B}$ if either $\mathcal{A} = \mathcal{B}$ or if there exists a row $r$ and column $c$ such that
\begin{enumerate}
	\item $\mathcal{A}_{\hat{r}, \hat{c}} = \mathcal{B}_{\hat{r}, \hat{c}}$ for all $\hat{r} \leq n$ and $\hat{c} < c$,
	\item $\mathcal{A}_{\hat{r}, c} = \mathcal{B}_{\hat{r},c}$ for all $\hat{r} < r$, and
	\item $\mathcal{A}_{r, c} > \mathcal{B}_{r, c}$.
\end{enumerate}

In a ranking $\succsim\ \in \rank(\PS)$ with its corresponding quotient order $\Sigma_1 \succ \dots \succ \Sigma_m$, $M^{\succsim, i}$ is the $n$-by-$m$ matrix for a player $i \in N$, also called their $\LOne$ score, wherein each row $r$ and column $c$ describes the number of $r$-sized coalitions in $\Sigma_c$ that $i$ appears in,
\begin{equation}\label{eq:l1}
	M^{\succsim, i}_{r, c} = \card{\{S \in \Sigma_c: i \in S \text{ and } \card{S} = r\}}.
\end{equation}

\begin{definition}\label{def:LOne}
	The \emph{$\LOne$ solution} is the function $\RRlone[]: \rank(\PS) \rightarrow \rank(N)$ defined for any coalitional ranking $\succsim\ \in \rank(\PS)$ as
	\[
		i \RRlone j \quad \text{ if } \quad M^{\succsim,i} \geq_\LOne M^{\succsim, j}.
	\]
We denote by $\RIlone$ the symmetric part of $\RRlone$ and by $\RPlone$ its asymmetric part.
\end{definition}

\section{Calculations using Minimal Winning Coalitions}\label{ch:calculations}
We will investigate the combinatorial connections between the lexicographical excellence scores and the $\LOne$ scores in monotonic rankings.
In particular, we are interested in cases where only minimal winning coalitions are given, as this could reduce the computational resources required to calculate these scores.

\subsection{Coalitional rankings with two equivalence classes}

In voting theory, the question of determining the power of a given voter using the set of minimal winning coalitions has been studied extensively.
Most notably, the Penrose-Banzhaf index counts the number of coalitions that a voter is critical in \cite{penrose1952,banzhaf1964}.
Note that minimal winning coalitions are coalitions in which \emph{every} voter is critical, but a voter may also be critical in winning coalitions that are not minimal.
Subsequently, Kirsch and Langner have proposed a function to calculate the Penrose-Banzhaf index by only using the set of minimal winning coalitions \cite{kirsch2008}.

In the same vein, we may first consider a monotonic power relation $\succsim$ with two equivalence classes, $\Sigma_1 \succ \Sigma_2$.
Let $\MWC$ be a function that, given an upward-closed set $\Sigma$ (if $S \in \Sigma$ then $T \in \Sigma$ for any $T \supset S$), produces the set of minimal winning coalitions,
\begin{equation}
	\MWC(\Sigma) = \{S \in \Sigma: \nexists T \in \Sigma \text{ s.t. } T \subset S\}.
\end{equation}

Given this kind of input, we begin by showing in Theorem~\ref{thm:l1} how any particular element of a player's $\LOne$ score matrix can be calculated.
For convenience, $[m] = \{1, \dots, m\}$ denotes the set of positive integers up to $m$.

\begin{theorem}\label{thm:l1}
	In a monotonic power relation $\succsim\ \in \rank(\PS)$ with two equivalence classes $\Sigma_1 \succ \Sigma_2$ and its set of minimal winning coalitions $\MWC(\Sigma_1) = \{S_1, \dots, S_m\}$, the $\LOne$ score value of a player $i$ in column $1$ and row $1 \leq r \leq n$ is
	\begin{equation}\label{eq:l1thm}
		M^{\succsim,i}_{r,1} = \sum_{\varnothing \neq L \subseteq [m]} (-1)^{\card{L}-1}\ \binom{n - \card{\{i\} \cup \bigcup_{\ell \in L} S_\ell}}{r - \card{\{i\} \cup \bigcup_{\ell \in L} S_\ell}}.
	\end{equation}
\end{theorem}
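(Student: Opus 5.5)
The plan is a direct inclusion--exclusion over the minimal winning coalitions. By monotonicity, $\Sigma_1$ is upward-closed, so a coalition $S$ lies in $\Sigma_1$ if and only if $S \supseteq S_\ell$ for some $\ell \in [m]$. Hence
\[
M^{\succsim,i}_{r,1} = \card{\{S \subseteq N : i \in S,\ \card{S} = r,\ \exists \ell \in [m] \text{ with } S_\ell \subseteq S\}} = \Bigl|\bigcup_{\ell \in [m]} A_\ell\Bigr|,
\]
where $A_\ell = \{S \subseteq N : \card{S} = r,\ S \supseteq \{i\} \cup S_\ell\}$. The first step is to justify this reformulation. If $S \in \Sigma_1$, a minimal element of $\Sigma_1$ below $S$ exists by finiteness, so $S$ contains some $S_\ell$. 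Conversely, upward-closure puts every superset of some $S_\ell$ into $\Sigma_1$.

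The second step is to apply the inclusion--exclusion principle to the union of the finite family $A_1, \dots, A_m$:
\[
\Bigl|\bigcup_{\ell} A_\ell\Bigr| = \sum_{\varnothing \neq L \subseteq [m]} (-1)^{\card{L}-1} \Bigl|\bigcap_{\ell \in L} A_\ell\Bigr|.
\]
The intersection $\bigcap_{\ell \in L} A_\ell$ consists exactly of the $r$-subsets of $N$ that contain $U_L := \{i\} \cup \bigcup_{\ell \in L} S_\ell$. Such a set is determined by choosing its remaining $r - \card{U_L}$ elements from the $n - \card{U_L}$ elements outside $U_L$. This gives the count $\binom{n - \card{U_L}}{r - \card{U_L}}$, and substituting it yields \eqref{eq:l1thm}.

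The only delicate point is the binomial convention. When $r < \card{U_L}$ the lower index is negative, and the coefficient must be read as $0$, since no $r$-set contains $U_L$. When $r = \card{U_L}$ the count is $1$, including the edge case $U_L = N$, where $\binom{0}{0} = 1$. I will state explicitly that $\binom{a}{b} = 0$ for $b < 0$. With this convention the formula holds for every $1 \le r \le n$, so the proof needs no case split, and no further obstacle is expected.
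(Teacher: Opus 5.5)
Your proof is correct and is essentially the paper's argument: inclusion--exclusion over the $r$-sized up-sets of the minimal winning coalitions restricted to sets containing $i$, where each intersection over $L$ is the up-set of $\{i\}\cup\bigcup_{\ell\in L}S_\ell$. Two small differences: by building $i$ into $\{i\}\cup S_\ell$ from the start, you avoid the paper's case split on whether $i\in S$; and you explicitly justify that every coalition in $\Sigma_1$ contains some minimal winning coalition, which the paper leaves implicit.
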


\begin{proof}
	Let $\succsim\ \in \rank(\PS)$ be a monotonic coalitional ranking with two equivalence classes, $\Sigma_1 \succ \Sigma_2$, and its set of minimal winning coalitions, $\MWC(\Sigma_1) = \{S_1, \dots, S_m\}$. Recall from Equation~(\ref{eq:l1}) that
	\[
		M^{\succsim,i}_{r,1} = \card{\{S \in \Sigma_1: i \in S \text{ and } \card{S} = r\}}.
	\]

	Take any coalition $S \in \MWC(\Sigma_1)$.
	This coalition has the option of forming a bigger coalition (that is still winning) with any $T \subseteq N \setminus S$ such that $\card{S \cup T} = r$.
	Thus, for any $\card{T} = r - \card{S}$, we may define the set of all coalitions $T'$ subsets of $N \setminus S$ of size $|T|$,
	\begin{equation}\label{eq:l1p1}
		\binom{N \setminus S}{\card{T}} = \big\{ T' \subseteq N \setminus S: \card{T'} = \card{T} \big\}.
	\end{equation}

	Its cardinality is the binomial coefficient,
	\begin{equation}\label{eq:l1p2}
		\card{\binom{N \setminus S}{\card{T}}} = \binom{\card{N \setminus S}}{\card{T}} = \binom{n - \card{S}}{\card{T}}.
	\end{equation}

	Note that the binomial coefficient is well defined when choosing a ``negative'' number of elements from a set, namely $\binom{n}{k} = 0$ for any $k < 0$ (it also yields $0$ for $k > n$).

	Next, we take into account that player $i$ must be part of these coalitions.
	Inspired by principal filters, this set will be notated as
	\begin{equation}
		\uparrow^{+i}_r S = \{ V \subseteq N : V \supseteq S \text{ and } i \in V \text{ and } \card{V} = r\}.
	\end{equation}

	To determine the cardinality $\card{\uparrow^{+i}_r S}$, we differentiate between two cases:
	\begin{enumerate}
		\item If $i \in S$, thus $i \in S \cup T$ for any $T \in \binom{N \setminus S}{\card{T}}$, then Equations~(\ref{eq:l1p1}) and (\ref{eq:l1p2}) hold;
		\item If $i \notin S$, then $i \in N \setminus S$ and $T$ must include $i$.
		      Therefore, Equation~(\ref{eq:l1p1}) changes to
		      \begin{equation}
			      \binom{N \setminus (\{i\} \cup S)}{\card{T} - 1}
		      \end{equation}
		      and its corresponding cardinality becomes
		      \begin{equation}
			      \card{\binom{N \setminus (\{i\} \cup S)}{\card{T} - 1}} = \binom{n - \card{\{i\} \cup S}}{\card{T} - 1}.
		      \end{equation}

	\end{enumerate}
	Substituting $\card{T}$ with $r - \card{S}$, we can simplify
	\begin{align*}
		\card{\uparrow^{+i}_r S}                   & = \begin{cases}
			                                               \binom{n - \card{S}}{r - \card{S}}                & , i \in S,    \\[0.3cm]
			                                               \binom{n - \card{\{i\} \cup S}}{r - \card{S} - 1} & , i \notin S;
		                                               \end{cases} \\
                                                       \Rightarrow \card{\uparrow^{+i}_r S} & = \begin{cases}
			                                               \binom{n - \card{\{i\} \cup S}}{r - \card{\{i\} \cup S}} & , i \in S,    \\[0.3cm]
			                                               \binom{n - \card{\{i\} \cup S}}{r - \card{\{i\} \cup S}} & , i \notin S;
		                                               \end{cases} \\[0.3cm]
		\Rightarrow \card{\uparrow^{+i}_r S} & = \binom{n - \card{\{i\} \cup S}}{r - \card{\{i\} \cup S}}.
	\end{align*}

	Observe that the union over all $r$-sized principal filters containing $i$, $\bigcup^m_{k=1} \uparrow^{+i}_r S_k$, is precisely the set of coalitions considered by $\LOne$, namely
	\[
		M^{\succsim,i}_{r,1} = \card{\bigcup^m_{k=1} \uparrow^{+i}_r S_k}.
	\]
	Since we have established how to compute the cardinality of $\uparrow^{+i}_r S_k$ for a single $S_k$, we can apply the inclusion--exclusion principle to calculate the cardinality of the union.
	The principle states that, for any finite sets $A_1, \dots, A_n,$
	\[
		\card{\bigcup_{k=1}^n A_k} = \sum_{\varnothing \neq L \subseteq [n]} (-1)^{\card{L}-1}\ \card{\bigcap_{\ell \in L} A_\ell}.
	\]
	To conclude our proof, we only need to determine the set
	\begin{align*}
		\bigcap_{k=1}^m \uparrow^{+i}_r S_k & = \bigcap_{k=1}^m \Big\{T \in \PS: T \supseteq S_k \text{ and } i \in T \text{ and } \card{T} = r\Big\} \\
		                                    & = \Big\{T \in \PS: T \supseteq \bigcup_{k=1}^m S_k \text{ and } i \in T \text{ and } \card{T} = r\Big\} =\ \uparrow^{+i}_r \bigcup_{k=1}^m S_k,
	\end{align*}
	and its cardinality
    \[
    \card{\uparrow^{+i}_r \bigcup_{k=1}^m S_k} = \binom{n - \card{\{i\} \cup \bigcup_{k = 1}^m S_k}}{r - \card{\{i\} \cup \bigcup_{k = 1}^m S_k}}.
    \]
    
    Although this considers specifically the union over the principal filters of all minimal winning coalitions, this same procedure applies to any of its subsets, that is, for some $\varnothing \neq L \subseteq [m]$,
    \[
        \bigcap_{\ell \in L} \uparrow^{+i}_r S_\ell = \uparrow^{+i}_r \bigcup_{\ell \in L} S_\ell.
    \]
	Applying the inclusion--exclusion principle, we obtain the desired expression in Equation~(\ref{eq:l1thm}), concluding our proof of Theorem~\ref{thm:l1}:
	\begin{align*}
		M_{r,1}^{\succsim, i} & = \card{\bigcup_{k=1}^m \uparrow^{+i}_r S_k} \\
        &= \sum_{\varnothing \neq L \subseteq [m]} (-1)^{\card{L} - 1} \ \card{\bigcap_{\ell \in L} \uparrow_r^{+i} S_\ell}                                                               \\
		                      & = \sum_{\varnothing \neq L \subseteq [m]} (-1)^{\card{L} - 1} \ \card{\uparrow_r^{+i} \bigcup_{\ell \in L} S_\ell} \\
                              &= \sum_{\varnothing \neq L \subseteq [m]} (-1)^{\card{L}-1}\ \binom{n - \card{\{i\} \cup \bigcup_{\ell \in L} S_\ell}}{r - \card{\{i\} \cup \bigcup_{\ell \in L} S_\ell}}.
	\end{align*}
\end{proof}

\begin{theorem}\label{thm:lex}
	In a monotonic power relation $\succsim\ \in \rank(\PS)$ with two equivalence classes $\Sigma_1 \succ \Sigma_2$ and its set of minimal winning coalitions $\MWC(\Sigma_1) = \{S_1, \dots, S_m\}$, the first lex-cel value of a player $i$ is
	\begin{equation}\label{eq:lexthm}
		\theta^{\succsim,i}_1 = \sum_{\varnothing \neq L \subseteq [m]} (-1)^{\card{L}-1}\ 2^{n-\card{\{i\}\ \cup\ \bigcup_{\ell \in L} S_\ell}}.
	\end{equation}
\end{theorem}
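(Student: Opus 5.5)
The plan is to reduce Theorem~\ref{thm:lex} to Theorem~\ref{thm:l1} by summing over coalition sizes. By Equations~(\ref{eq:lex}) and (\ref{eq:l1}), the lex-cel count splits by cardinality:
\[
\theta^{\succsim,i}_1 = \card{\{S \in \Sigma_1 : i \in S\}} = \sum_{r=1}^{n} \card{\{S \in \Sigma_1 : i \in S,\ \card{S} = r\}} = \sum_{r=1}^{n} M^{\succsim,i}_{r,1}.
\]
Substituting Equation~(\ref{eq:l1thm}) for each $M^{\succsim,i}_{r,1}$ and exchanging the two finite sums gives
\[
\theta^{\succsim,i}_1 = \sum_{\varnothing \neq L \subseteq [m]} (-1)^{\card{L}-1} \sum_{r=1}^{n} \binom{n - u_L}{r - u_L},
\qquad u_L = \card{\{i\} \cup \textstyle\bigcup_{\ell \in L} S_\ell}.
\]

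Next I evaluate the inner sum for fixed $L$. Since $1 \le u_L \le n$, the convention $\binom{a}{b}=0$ for $b<0$ or $b>a$ removes every term with $r<u_L$. The remaining terms have $k = r-u_L$ running over $0,\dots,n-u_L$, so the inner sum is $\sum_{k=0}^{n-u_L}\binom{n-u_L}{k} = 2^{n-u_L}$, the full binomial row sum. This yields Equation~(\ref{eq:lexthm}).

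As a sanity check, and as an alternative self-contained proof, the same argument can be run directly. Let $\uparrow^{+i} S = \{V \supseteq S : i \in V\}$. Then $\Sigma_1$ restricted to coalitions containing $i$ is $\bigcup_k \uparrow^{+i} S_k$, because monotonicity makes $\Sigma_1$ upward closed and generated by its minimal elements. Intersections behave as before: $\bigcap_{\ell\in L}\uparrow^{+i}S_\ell = \uparrow^{+i}\bigcup_{\ell\in L}S_\ell$. The cardinality $2^{n-u_L}$ counts the free choices of elements outside $\{i\}\cup\bigcup_{\ell\in L} S_\ell$. Inclusion--exclusion then finishes the argument.

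There is no real obstacle here. The only point needing care is the boundary behaviour of the binomial coefficients, namely that the range $r=1,\dots,n$ covers exactly $k=0,\dots,n-u_L$, which uses $u_L \ge 1$ because $i$ is always included.
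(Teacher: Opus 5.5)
Your proposal is correct and follows the paper's proof essentially verbatim: split $\theta^{\succsim,i}_1$ into the disjoint size classes $\sum_{r=1}^{n} M^{\succsim,i}_{r,1}$, substitute Theorem~\ref{thm:l1}, swap the finite sums, and collapse the inner sum to $2^{n-u_L}$ by the binomial row sum, using the convention that out-of-range lower indices give zero. Your observation that $u_L \geq 1$ makes the range $r=1,\dots,n$ cover exactly $k=0,\dots,n-u_L$ is the paper's condition $1 \le c \le n$, and your direct inclusion--exclusion sketch is a valid alternative that the argument does not need.
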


\begin{proof}
	This proof builds upon Theorem~\ref{thm:l1}.

	First, observe the following relationship between the sets defined in Equation~(\ref{eq:lex}) and Equation~(\ref{eq:l1}):
	\begin{equation}\label{eq:lexl1}
		\{S \in \Sigma_1: i \in S\} = \bigcup_{r=1}^n \{S \in \Sigma_1: i \in S \text{ and } \card{S} = r\}.
	\end{equation}

	Since each individual set on the right hand side of Equation~(\ref{eq:lexl1}) is disjoint, it follows that
	\[
		\card{\{S \in \Sigma_1: i \in S\}} = \sum_{r=1}^n \card{\{S \in \Sigma_1: i \in S \text{ and } \card{S} = r\}}.
	\]

	Second, recall the well-known binomial identity $\sum_{k=0}^n \binom{n}{k} = 2^n$.

	Note that, as mentioned in the proof of Theorem~\ref{thm:l1}, a negative lower index in a binomial coefficient yields zero.
	Therefore, the extended sum $\sum_{k=-\infty}^n \binom{n}{k}$ also produces $2^n$.

	For any constant $c$ such that $1 \leq c \leq n$, we can determine that
	\[
		\sum_{k=1}^n \binom{n-c}{k-c} = 2^{n-c}.
	\]

	Thus, we conclude:
	\begin{align*}
		\theta^{\succsim,i}_1 & = \sum_{r=1}^n M^{\succsim,i}_{r,1}                                                                                                                                                    \\[.2cm]
		                      & = \sum_{r=1}^n \sum_{\varnothing \neq L \subseteq [m]} (-1)^{\card{L}-1}\ \binom{n - \card{\{i\} \cup \bigcup_{\ell \in L} S_\ell}}{r - \card{\{i\} \cup \bigcup_{\ell \in L} S_\ell}} \\[.2cm]
		                      & = \sum_{\varnothing \neq L \subseteq [m]} (-1)^{\card{L}-1}\ \sum_{r=1}^n \binom{n - \card{\{i\} \cup \bigcup_{\ell \in L} S_\ell}}{r - \card{\{i\} \cup \bigcup_{\ell \in L} S_\ell}} \\[.2cm]
		                      & = \sum_{\varnothing \neq L \subseteq [m]} (-1)^{\card{L}-1}\ 2^{n-\card{\{i\}\ \cup\ \bigcup_{\ell \in L} S_\ell}}.
	\end{align*}
\end{proof}

\subsection{Calculations on any monotonic coalitional ranking}

Given a number of players $n$, the total number of coalitions containing an element $i$ remains constant.
Therefore, it is easy to verify that, in dichotomous rankings $\succsim\ \in \rank(2^N)$ with $\Sigma_1 \succ \Sigma_2$,
\begin{equation*}
	\theta^{\succsim, i}_2 = 2^{n-1} - \theta^{\succsim,i}_1\quad\text{ and }\quad M^{\succsim,i}_{r,2} = \binom{n-1}{r-1} - M^{\succsim,i}_{r,1}.
\end{equation*}

Extending the set of monotonic power relations in $\rank(2^N)$ to partition into any number of equivalence classes, consider the set of minimal winning coalitions in $\Sigma_k$ to be
\begin{equation*}
	\MWC(\Sigma_1 \cup \dots \cup \Sigma_k) = \{S_1, \dots, S_m\}.
\end{equation*}

Obviously,
\begin{equation*}
	\Sigma_k = (\Sigma_1 \cup \dots \cup \Sigma_k) \setminus (\Sigma_1 \cup \dots \cup \Sigma_{k-1}).
\end{equation*}

Therefore,
\begin{align*}
	\theta^{\succsim, i}_k &= \sum_{\varnothing \neq L \subseteq [m]} (-1)^{\card{L}-1}\ 2^{n-\card{\{i\}\ \cup\ \bigcup_{\ell \in L} S_\ell}} - (\theta^{\succsim,i}_1 + \dots + \theta^{\succsim,i}_{k-1})\text{, and}\\
	M^{\succsim,i}_{r,k} &= \sum_{\varnothing \neq L \subseteq [m]} (-1)^{\card{L}-1}\ \binom{n - \card{\{i\} \cup \bigcup_{\ell \in L} S_\ell}}{r - \card{\{i\} \cup \bigcup_{\ell \in L} S_\ell}} - (M^{\succsim,i}_{r,1} + \dots + M^{\succsim,i}_{r,k-1}).
\end{align*}

In other words, for each index $k$, a monotonic power relation can be reduced to a simple game by treating all coalitions in $\Sigma_1 \cup \dots \cup \Sigma_k$ as winning and all remaining coalitions as losing.
Since at each step we overcount player $i$ by the number of appearances in $\Sigma_1$ through $\Sigma_{k-1}$, subtracting those suffices to produce the exact number of times $i$ appears in $\Sigma_k$.

\begin{example}\label{ex:calc}
	Consider a coalitional ranking $\succsim\ \in \rank(2^N)$ such that\footnote{In the following, we omit braces and commas to specify a set of elements: for instance, for coalition $\{1,2,3\}$ we simply write $123$.}
	\[
		\begin{alignedat}{1}
			 & (12345 \sim 1234 \sim 1235 \sim 1245 \sim 1345 \sim 2345                                                                 \\
			 & \quad \sim 123 \sim 134 \sim 135 \sim \boldsymbol{234} \sim \boldsymbol{235} \sim \boldsymbol{245} \sim \boldsymbol{13})
			\succ \boldsymbol{23}
			\succ \Sigma_3
		\end{alignedat}
	\]
	with $\Sigma_3$ containing all remaining coalitions.
	The corresponding sets of minimal winning coalitions (shown in bold in the ranking) are
	\begin{align*}
		\mathfrak{M}(\Sigma_1)                             & = \{13, 234, 235, 245\}, \\
		\mathfrak{M}(\Sigma_1 \cup \Sigma_2)               & = \{13, 23, 245\},       \\
		\mathfrak{M}(\Sigma_1 \cup \Sigma_2 \cup \Sigma_3) & = \{\varnothing\}.
	\end{align*}
	Because $\{1, 3\} \in \Sigma_1$ is the only coalition of size $2$, the $\LOne$ ranking states that
	\[
		1 \RPlone 2.
	\]
	Following Theorem \ref{thm:lex}, the lex-cel value $\theta^{\succsim,1}_1$ can be calculated as follows:
	\begin{alignat*}{2}
		\theta^{\succsim,1}_1
		 & = 2^{n-\card{\{1, 3\}}} + 2^{n-\card{\{1,2, 3, 4\}}} + 2^{n-\card{\{1,2, 3, 5\}}} + 2^{n-\card{\{1,2, 4, 5\}}}
		 &                                                                                                                & \quad (\card{L}=1) \\
		 & \quad - (2^{n-\card{\{1,2,3,4\}}} + 2^{n-\card{\{1,2,3,5\}}} + 4 * 2^{n-\card{\{1,2,3,4,5\}}})
		 &                                                                                                                & \quad (\card{L}=2) \\
		 & \quad + 4 * 2^{n-\card{\{1,2,3,4,5\}}} - 2^{n-\card{\{1,2,3,4,5\}}}
		 &                                                                                                                & \quad (\card{L} = 3, \card{L}=4) \\
		 & = (2^3 + 3*2^1) - (2*2^1 + 4*2^0) + 4*2^0 - 2^0                                                                                     = 9.
	\end{alignat*}
	Coincidentally, $\theta_1^{\succsim,2}$ is also $9$.
	Advancing to the second equivalence class, since it only contains $\{2,3\}$, lex-cel then determines the opposite preference to $\LOne$,
	\[
		2 \RPlex 1.
	\]
\end{example}

\section{Optimization methods}\label{ch:optimization}

While these calculation methods can be convenient, they become infeasible when the number of minimal winning coalitions is large. Since this set forms an antichain, Sperner's theorem \cite{sperner} implies that its maximum cardinality is
\[
	\max \card{\MWC(\Sigma)} = \binom{n}{\lfloor n/2 \rfloor},
\]
with $\Sigma = \Sigma_1 \cup \dots \cup \Sigma_k$ for some $k \geq 1$.
Therefore, since the formulas above require summation over all nonempty subsets of \(\MWC(\Sigma)\), the worst-case running time is
\[
	O\!\left(2^{\binom{n}{\lfloor n/2 \rfloor}}\right).
\]

We would like to highlight two methods that can reduce the number of coalitions that need to be considered.

First, observe that in Equations (\ref{eq:l1thm}) and (\ref{eq:lexthm}), only unions of the form
\[
\{i\} \cup \bigcup_{\ell \in L} S_\ell
\]
are relevant.

Now suppose that \(S \in \MWC(\Sigma)\) satisfies \(i \in S\), and that for some \(T \in \MWC(\Sigma)\) with \(i \notin T\) we have $S \subseteq \{i\} \cup T$.
Then every superset of \(\{i\} \cup T\) is also a superset of \(S\), so the contribution of \(T\) is already covered by \(S\). Removing \(T\) therefore does not affect the calculation.

More generally, adding $i$ to any set in $\MWC(\Sigma)$ does not alter the calculations.
Therefore, it suffices to consider only the inclusion-minimal sets in
\[\{\{i\} \cup S: S \in \MWC(\Sigma)\}.\]
That is, the relevant family is \[\mathcal F^i(\Sigma)
	= \MWC(\{\{i\} \cup S: S \in \MWC(\Sigma)\}).\]

\begin{example}
	Continuing Example \ref{ex:calc}, recall that $\MWC(\Sigma_1) = \{13, 234, 235, 245\}$ and $\MWC(\Sigma_1 \cup \Sigma_2) = \{13, 23, 245\}$.
	Then, to calculate the lex-cel and $\LOne$ values for the elements $1$ and $2$, it suffices to consider the following sets:
	\begin{align*}
		\mathcal{F}^1(\Sigma_1) & = \{13, 1245\},            & \mathcal{F}^1(\Sigma_1 \cup \Sigma_2) & = \{13, 1245\}, \\
		\mathcal{F}^2(\Sigma_1) & = \{123, 234, 235, 245 \}, & \mathcal{F}^2(\Sigma_1 \cup \Sigma_2) & = \{23, 245\}.
	\end{align*}
\end{example}

\begin{proposition}\label{prp:opt1}
	For a given $n$, the worst-case running time to calculate a lex-cel or $\LOne$ score for an element $i$ given its inclusion-minimal set $\mathcal{F}^i(\Sigma)$ is
	\[
		O\!\left(2^{\binom{n-1}{\lfloor n/2 \rfloor - 1}}\right).
	\]
\end{proposition}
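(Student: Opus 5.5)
The plan is to reduce the running time to a bound on the size of the family that the inclusion--exclusion formulas of Theorems~\ref{thm:l1} and~\ref{thm:lex} actually iterate over. The first step is a cost accounting. When we apply Equations~(\ref{eq:l1thm}) and~(\ref{eq:lexthm}) to $\mathcal F^i(\Sigma)$ instead of $\MWC(\Sigma)$, the value does not change, by the redundancy argument given just before the statement. The sum then ranges over the nonempty subfamilies $L$ of $\mathcal F^i(\Sigma)$. Each term costs polynomially many operations in $n$: one union and one binomial coefficient or power of two. So the running time is $O(\mathrm{poly}(n)\cdot 2^{\card{\mathcal F^i(\Sigma)}})$, and it remains to bound $\card{\mathcal F^i(\Sigma)}$ by $\binom{n-1}{\lfloor n/2\rfloor-1}$, absorbing the polynomial factor as the paper does for the bound $O(2^{\binom{n}{\lfloor n/2\rfloor}})$.

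The second step transfers the problem to $N\setminus\{i\}$. Every member of $\mathcal F^i(\Sigma)$ contains $i$, and the family is an antichain, since it is the image of $\MWC$. The map $F\mapsto F\setminus\{i\}$ is therefore a bijection onto an antichain $\mathcal A$ on the $(n-1)$-element ground set $N\setminus\{i\}$. By Sperner's theorem, as invoked earlier, $\card{\mathcal A}\le\binom{n-1}{\lfloor (n-1)/2\rfloor}$. For even $n$ we have $\lfloor (n-1)/2\rfloor=n/2-1=\lfloor n/2\rfloor-1$, so the claimed bound follows directly in that case.

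The main obstacle is odd $n=2k+1$. There $\binom{2k}{k-1}<\binom{2k}{k}$, and the full middle layer of $N\setminus\{i\}$ is an antichain that Sperner's theorem does not rule out. An argument on $\card{\mathcal F^i(\Sigma)}$ alone therefore cannot give the stated exponent. I would instead change the counted object. The quantity in question is the number of $T\subseteq N\setminus\{i\}$ lying in the up-set $\uparrow\!\mathcal A$. It equals $2^{n-1}$ (or $\binom{n-1}{r-1}$ for the $\LOne$ row) minus the number of $T$ in the complementary down-set. That down-set is determined by its maximal elements $\mathcal B$, which form another antichain, and an analogous inclusion--exclusion over subfamilies of $\mathcal B$ counts it, with intersections of principal ideals given by intersections of sets. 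The algorithm evaluates whichever of the two expansions uses the smaller family. So the exponent becomes $\min(\card{\mathcal A},\card{\mathcal B})$.

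The remaining step is to show $\min(\card{\mathcal A},\card{\mathcal B})\le\binom{2k}{k-1}$. The extreme case is the full middle layer: there $\mathcal A$ is all $k$-sets and $\mathcal B$ is all $(k-1)$-sets, which attains the bound exactly. For the general case I would try an LYM-type weighting argument. The sets of $\mathcal A$ and the sets of $\mathcal B$ are incomparable-by-one-step "boundary" sets of the same up-set, so a maximal chain in $2^{N\setminus\{i\}}$ meets $\mathcal A$ and $\mathcal B$ at most once each, at consecutive positions. I would then double count pairs (chain, boundary crossing) to show $\card{\mathcal A}+\card{\mathcal B}\le\binom{2k}{k}+\binom{2k}{k-1}$, with the middle two layers as extremal configuration. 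From this, the smaller family has size at most $\binom{2k}{k-1}$ would follow, although the split need not be even in general; I expect this inequality to be the delicate point. If it fails, the fallback is to prove the bound only for even $n$ and record the odd case as requiring the dual representation.
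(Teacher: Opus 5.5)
You are right to reduce the question to bounding $\card{\mathcal F^i(\Sigma)}$. Your even-$n$ argument (the bijection $F\mapsto F\setminus\{i\}$ onto an antichain of $2^{N\setminus\{i\}}$, then Sperner) is correct. It actually supplies the upper-bound half that the paper's proof lacks: the paper only evaluates the single configuration $\MWC(\Sigma)=\{S:\card{S}=\lfloor n/2\rfloor\}$, notes that $\mathcal F^i(\Sigma)=\MWC^{+i}(\Sigma)$ has $\binom{n-1}{\lfloor n/2\rfloor-1}$ members, and declares this the worst case.

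The genuine gap is your odd-$n$ step. The inequality $\card{\mathcal A}+\card{\mathcal B}\le\binom{2k}{k}+\binom{2k}{k-1}$ is not the delicate part. It follows from Erd\H{o}s's theorem, because $\mathcal A\cup\mathcal B$ contains no chain of three sets: $A\subseteq B$ is impossible, and $B\subset A$ forces $\card{A\setminus B}=1$. But it does not give what you need. A sum bound only yields $\min(\card{\mathcal A},\card{\mathcal B})\le\frac12\bigl(\binom{2k}{k}+\binom{2k}{k-1}\bigr)$, which exceeds $\binom{2k}{k-1}$ by $\frac{1}{2(k+1)}\binom{2k}{k}$. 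An even split is exactly the case a sum bound cannot exclude. Your chain counting cannot exclude it either: half of the maximal chains crossing at level $k$ and half at level $k+1$ satisfies all the one-hit, consecutive-level constraints. You would need a separate argument controlling the balance between $\mathcal A$ and $\mathcal B$, for example Kruskal--Katona shadow bounds, and the proposal has none.

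More fundamentally, the dual detour analyses a different algorithm, and if the algorithm may be changed the statement becomes trivial. Enumerating the $2^{n-1}$ coalitions containing $i$ and testing each against $\mathcal F^i(\Sigma)$ costs $O(n\,2^{n}\,\card{\mathcal F^i(\Sigma)})$, which is single-exponential and far below the stated bound. The proposition only has content for inclusion--exclusion over $\mathcal F^i(\Sigma)$, which is what the paper's proof assumes. For that procedure, your odd-$n$ observation is not an obstacle but a counterexample. For $n=2k+1$, take $\MWC(\Sigma)=\{S\subseteq N:\card{S}=k+1\}$. This gives $\mathcal F^i(\Sigma)=\{S\ni i:\card{S}=k+1\}$ with $\binom{2k}{k}$ members, and $2^{\binom{2k}{k}}/2^{\binom{2k}{k-1}}=2^{\binom{2k}{k}/(k+1)}$ is unbounded. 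The paper's proof misses this because for odd $n$ it uses the lower middle layer, while the upper one is worse. So the right outcome is your fallback, stated sharply: the bound holds for even $n$, and in general the worst case is $2^{\binom{n-1}{\lfloor (n-1)/2\rfloor}}$ up to polynomial factors, attained by the middle-layer examples.
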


The proof of Proposition \ref{prp:opt1} can be found in the Appendix.

A second improvement to consider is that social ranking solutions are generally concerned with ordering only pairs of elements, not calculating some index value.
Thus, if a comparison between $i$ and $j$ depends on some absolute values $v_i$ and $v_j$, the difference $v_i - v_j$ suffices to compare these two.

Without loss of generality, let $\succsim\ \in \rank(2^N)$ with $\Sigma_1 \succ \Sigma_2$ be a dichotomous ranking, and $\MWC(\Sigma_1) = \{S_1, \dots, S_m\}$.
Then, lex-cel determines that
\[
	\begin{cases}
		i \RPlex j & \text{if } \theta^{\succsim,i}_1 - \theta^{\succsim,j}_1 > 0, \\
		j \RPlex i & \text{if } \theta^{\succsim,i}_1 - \theta^{\succsim,j}_1 < 0, \\
		i \RIlex j & \text{otherwise.}
	\end{cases}
\]

For brevity, denote $S_L = \bigcup_{\ell \in L} S_\ell$. We calculate
\begin{align*}
	 & \theta^{\succsim,i}_1 - \theta^{\succsim,j}_1                                                                                                                                                                                         \\
	 & = \sum_{\varnothing \neq L \subseteq [m]} (-1)^{\card{L}-1}\ 2^{n-\card{\{i\}\ \cup\ S_L}} - \sum_{\varnothing \neq L \subseteq [m]} (-1)^{\card{L}-1}\ 2^{n-\card{\{j\}\ \cup\ S_L}} \\
	 & = \sum_{\varnothing \neq L \subseteq [m]} (-1)^{\card{L}-1}\ \Big(2^{n-\card{\{i\}\ \cup\ S_L}} - 2^{n-\card{\{j\}\ \cup\ S_L}}\Big).
\end{align*}

The subtraction indicates that the relation between $i$ and $j$ depends only on the unions over $S_\ell$ that contain either $i$ or $j$ but not both,
\[
	2^{n-\card{\{i\}\cup S_L}} -
		2^{n-\card{\{j\}\cup S_L}} =
	\begin{cases}
		\hphantom{-}2^{n - 1 - \card{S_L}} & \text{if } i \in S_L \not\ni j, \\
		-2^{n - 1 - \card{S_L}}            & \text{if } i \notin S_L \ni j,  \\
		\hphantom{-}0                                              & \text{otherwise.}
	\end{cases}
\]

\begin{proposition}
Given two elements $i$ and $j$, the difference between their lex-cel score value in a given monotonic ranking $\succsim\ \in \rank(2^N)$ is
\begin{align}
	 & \theta^{\succsim,i}_1 - \theta^{\succsim,j}_1                                                                                                                                              \label{eq:lexdiff} \\
	 & = \sum_{\substack{\varnothing \neq L \subseteq [m] \\ \mathclap{\{i,j\} \cap S_L = \{i\}}}} (-1)^{\card{L}-1}\ 2^{n - 1 - \card{S_L}}
	 \quad - \sum_{\substack{\varnothing \neq L \subseteq [m]\\ \mathclap{\{i,j\} \cap S_L = \{j\}}}} (-1)^{\card{L}-1}\ 2^{n - 1 - \card{S_L}}. \nonumber
\end{align}
\end{proposition}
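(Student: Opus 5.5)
The plan is to start from Theorem~\ref{thm:lex}, applied once to $i$ and once to $j$, with the same family $\MWC(\Sigma_1)=\{S_1,\dots,S_m\}$. Because both expansions are indexed by the same subsets $\varnothing\neq L\subseteq[m]$, we can subtract them termwise. With the shorthand $S_L=\bigcup_{\ell\in L}S_\ell$ this gives
\[
\theta^{\succsim,i}_1-\theta^{\succsim,j}_1=\sum_{\varnothing\neq L\subseteq[m]}(-1)^{\card{L}-1}\Big(2^{n-\card{\{i\}\cup S_L}}-2^{n-\card{\{j\}\cup S_L}}\Big),
\]
exactly as in the computation preceding the statement. For a ranking with more than two classes, the statement about the first lex-cel value refers only to $\Sigma_1$. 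The same argument also applies verbatim to each cumulative truncation $\Sigma_1\cup\dots\cup\Sigma_k$ viewed as a simple game.

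The core step is evaluating the bracket for a fixed $L$. I will split into four cases according to $\{i,j\}\cap S_L$:
\begin{itemize}
\item If $\{i,j\}\cap S_L=\{i,j\}$, then $\card{\{i\}\cup S_L}=\card{\{j\}\cup S_L}=\card{S_L}$, so the bracket is $0$.
\item If $\{i,j\}\cap S_L=\varnothing$, both cardinalities equal $\card{S_L}+1$, so the bracket is again $0$.
\item If $\{i,j\}\cap S_L=\{i\}$, the first exponent is $n-\card{S_L}$ and the second is $n-\card{S_L}-1$. The bracket is then $2^{n-\card{S_L}}-2^{n-1-\card{S_L}}=2^{n-1-\card{S_L}}$.
\item If $\{i,j\}\cap S_L=\{j\}$, the case is symmetric and the bracket is $-2^{n-1-\card{S_L}}$.
\end{itemize}

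Substituting these values, the sum restricts to the two families of $L$ with $\{i,j\}\cap S_L=\{i\}$ and $\{i,j\}\cap S_L=\{j\}$, respectively. Pulling the sign out of the second family gives exactly Equation~(\ref{eq:lexdiff}).

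There is no real obstacle here. The only point needing care is the justification for subtracting termwise: both inclusion--exclusion expansions must run over the identical index set $[m]$. This holds because $\MWC(\Sigma_1)$ does not depend on the player, unlike the reduced families $\mathcal F^i(\Sigma)$, so one must not use those here. The arithmetic identity $2^{a+1}-2^{a}=2^{a}$ with $a=n-1-\card{S_L}$ closes the argument.
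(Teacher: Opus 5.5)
Your proposal is correct and is essentially the paper's own argument. You subtract the two instances of Theorem~\ref{thm:lex} termwise over the common, player-independent index set $\MWC(\Sigma_1)=\{S_1,\dots,S_m\}$, then evaluate $2^{n-\card{\{i\}\cup S_L}}-2^{n-\card{\{j\}\cup S_L}}$ by cases on $\{i,j\}\cap S_L$, exactly as in the computation preceding the statement.
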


For the $\LOne$ solution, we can apply Pascal's rule for binomial coefficients, which states that
\[
	\binom{n}{k} - \binom{n-1}{k-1} = \binom{n-1}{k}.
\]

As before, we calculate the difference between two $\LOne$ score values by determining
\[
	\binom{n-\card{\{i\}\cup S_L}}{r-\card{\{i\}\cup S_L}} -
	\binom{n-\card{\{j\}\cup S_L}}{r-\card{\{j\}\cup S_L}} =
	\begin{cases}
		\hphantom{-} \binom{n-\card{S_L}-1}{r-\card{S_L}} & \text{if } i \in S_L \not\ni j, \\
		- \binom{n-\card{S_L}-1}{r-\card{S_L}}            & \text{if } i \notin S_L \ni j, \\
		\hphantom{-} 0                                    & \text{otherwise.}
	\end{cases}
\]

\begin{proposition}
Given two elements $i$ and $j$, the difference between their $\LOne$ score value in a given monotonic ranking $\succsim\ \in \rank(2^N)$ is
\begin{align}
	 & M^{\succsim,i}_{r,c} - M^{\succsim,j}_{r,c} \\
	 & =
	\sum_{\substack{\varnothing \neq L \subseteq [m] \\ \mathclap{\{i,j\} \cap S_L = \{i\}}}} (-1)^{\card{L}-1}\ \binom{n - \card{S_L} - 1}{r - \card{S_L}} -
	\sum_{\substack{\varnothing \neq L \subseteq [m] \\ \mathclap{\{i,j\} \cap S_L = \{j\}}}} (-1)^{\card{L}-1}\ \binom{n - \card{S_L} - 1}{r - \card{S_L}}. \nonumber
\end{align}
\end{proposition}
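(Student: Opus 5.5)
The plan is to start from Theorem~\ref{thm:l1} and take the difference term by term, exactly as was done above for lex-cel. As in the lex-cel case, I would first treat the dichotomous setting $\Sigma_1 \succ \Sigma_2$ with $\MWC(\Sigma_1)=\{S_1,\dots,S_m\}$ and $c=1$.

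For a general column $c$, I would apply the same argument to the cumulative truncation $\Sigma_1\cup\dots\cup\Sigma_c$, whose minimal winning coalitions then play the role of the $S_\ell$. This gives the difference of the cumulative counts $\sum_{k\le c} M^{\succsim,i}_{r,k}-\sum_{k\le c} M^{\succsim,j}_{r,k}$. The column-$c$ difference follows by subtracting the already computed differences for columns $1,\dots,c-1$, as in the cumulative formulas of the previous subsection. So the substance of the proof is the case $c=1$.

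Writing both $M^{\succsim,i}_{r,1}$ and $M^{\succsim,j}_{r,1}$ via Equation~(\ref{eq:l1thm}) as sums over the same index set $\varnothing\neq L\subseteq[m]$, linearity gives
\[
M^{\succsim,i}_{r,1}-M^{\succsim,j}_{r,1}=\sum_{\varnothing\neq L\subseteq[m]}(-1)^{\card{L}-1}\left(\binom{n-\card{\{i\}\cup S_L}}{r-\card{\{i\}\cup S_L}}-\binom{n-\card{\{j\}\cup S_L}}{r-\card{\{j\}\cup S_L}}\right).
\]
Set $s=\card{S_L}$. I then split according to $\{i,j\}\cap S_L$.

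If both $i$ and $j$ lie in $S_L$, then both unions equal $S_L$. If neither lies in $S_L$, then both unions have size $s+1$. In either case the two binomials coincide and the term vanishes.

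If $i\in S_L\not\ni j$, the bracket is $\binom{n-s}{r-s}-\binom{n-s-1}{r-s-1}$. Pascal's rule turns this into $\binom{n-s-1}{r-s}$. The case $i\notin S_L\ni j$ is the mirror image and gives the negative of the same quantity. Collecting the surviving terms into the two index sets $\{i,j\}\cap S_L=\{i\}$ and $\{i,j\}\cap S_L=\{j\}$ yields the claimed formula.

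The only real obstacle is making sure Pascal's rule is valid at the boundary values, given the paper's convention that $\binom{a}{k}=0$ for $k<0$ or $k>a$. Since $j\notin S_L$, we have $n-s\ge 1$, so the upper index $n-s-1$ is nonnegative. Three boundary cases remain:
\begin{itemize}
\item If $r-s<0$, all three binomials are $0$.
\item If $r=s$, we get $1-0=1=\binom{n-s-1}{0}$.
\item If $r-s>n-s$, both sides are $0$; and when $r-s=n-s$, both sides equal $0$ as well, since $\binom{n-s-1}{n-s}=0$.
\end{itemize}
I would verify these cases explicitly and otherwise invoke the standard identity.
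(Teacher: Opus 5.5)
Your proof is correct and follows the same route as the paper. You take the termwise difference of the inclusion--exclusion formula from Theorem~\ref{thm:l1}, split on $\{i,j\}\cap S_L$, and apply Pascal's rule. Your explicit boundary checks and the cumulative-truncation remark for $c>1$ (which correctly notes that the displayed formula then gives the cumulative difference) are careful additions that the paper leaves implicit behind its ``without loss of generality'' reduction to the dichotomous case with $c=1$.
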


\begin{example}
	Continuing Example \ref{ex:calc}, let $S_1 = \{1,3\}$ and $S_2$, $S_3$, and $S_4$ be assigned to the remaining elements of $\MWC(\Sigma_1)$.
	The only subsets $L \subseteq [m]$ to consider when comparing $1$ against $2$ are
	\begin{align*}
		\theta^{\succsim,1}_1 - \theta^{\succsim,2}_1 & = \sum_{\mathclap{L \in \{\{1\}\}}} (-1)^{\card{L}-1}\ 2^{n-1-\card{S_L}} - \sum_{\mathclap{\varnothing \neq L \subseteq \{2,3,4\}}} (-1)^{\card{L}-1}\ 2^{n-1-\card{S_L}} \\
        & = 2^2 - (3*2^1-3*2^0+2^0) = 0.
	\end{align*}
\end{example}

\begin{proposition}\label{prp:opt2}
	For a given $n$, the worst-case running time to calculate the difference of a lex-cel or $\LOne$ score between two elements is
	\[
		O\!\left( 2 \cdot 2^{\binom{n-2}{\lfloor n/2 \rfloor - 1}} \right).
	\]
\end{proposition}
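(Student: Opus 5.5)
The plan mirrors the argument behind Proposition~\ref{prp:opt1}. The difference formula (\ref{eq:lexdiff}) and its $\LOne$ analogue consist of two sums. The first runs over index sets $L$ with $i \in S_L$ and $j \notin S_L$; the second over those with $j \in S_L$ and $i \notin S_L$. The running time is governed by the number of terms, so I will bound the size of each sum separately and then add the two bounds, which is where the factor $2$ comes from.

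First, I restrict the relevant minimal winning coalitions. In the first sum, $j \notin S_L$ forces $j \notin S_\ell$ for every $\ell \in L$. So only the coalitions of $\MWC(\Sigma)$ avoiding $j$ can appear, and at least one chosen coalition must contain $i$. As in the construction of $\mathcal{F}^i(\Sigma)$, $i$ may be adjoined to each of these coalitions without changing the terms, because $i \in S_L$ already. The relevant family therefore becomes
\[
\MWC(\{\{i\} \cup S : S \in \MWC(\Sigma),\ j \notin S\}).
\]
Every member of this family contains $i$ and avoids $j$. Removing $i$ turns the family into an antichain of subsets of the $(n-2)$-element set $N \setminus \{i,j\}$.

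Second, I bound the size of this antichain. The natural tool is Sperner's theorem on $N\setminus\{i,j\}$, exactly as in Proposition~\ref{prp:opt1}, where deleting $i$ reduced the ground set to $n-1$ elements. Sperner gives at most $\binom{n-2}{\lfloor (n-2)/2\rfloor}$ members, and $\binom{n-2}{\lfloor n/2\rfloor -1}$ equals this value because $\lfloor (n-2)/2 \rfloor = \lfloor n/2\rfloor - 1$. Enumerating all nonempty subfamilies then costs $O(2^{\binom{n-2}{\lfloor n/2\rfloor-1}})$, and each term is a polynomial-time computation absorbed into the bound as in the earlier propositions.

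By symmetry, the second sum is handled the same way with the roles of $i$ and $j$ exchanged, giving the same bound. Adding the two yields $O(2 \cdot 2^{\binom{n-2}{\lfloor n/2 \rfloor - 1}})$. The $\LOne$ case is identical, since only the binomial summand changes, not the index set.

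The main obstacle is justifying the reduction to the inclusion-minimal family. I must check that discarding a coalition $T$ with $\{i\}\cup T \supseteq \{i\}\cup S$ leaves the restricted sum unchanged. The argument from Proposition~\ref{prp:opt1} needs adapting here, because the sum ranges over the condition $i\in S_L$ rather than over all $L$. I would argue at the level of the underlying sets instead of the formula. The first sum counts, with inclusion--exclusion signs, the $r$-sized supersets of some chosen coalition that contain $i$ and exclude $j$. That set is a union of filters generated by $\{i\}\cup S$ for $S\not\ni j$, intersected with the condition that $j$ is absent. Inclusion--exclusion over only the minimal generators computes the same cardinality, which gives the restricted sum the claimed number of terms.
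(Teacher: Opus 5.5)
Your bound is correct and your route to it is cleaner than the paper's. However, the justification of the reduction rests on a false identification.

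\textbf{The flawed step.} The first sum in (\ref{eq:lexdiff}) is not the number of winning coalitions containing $i$ and not $j$. (For $\LOne$, restrict everything here to coalitions of size $r$.) That sum only runs over $L$ with $i\in S_L$, so it omits every $L$ built solely from minimal winning coalitions avoiding both $i$ and $j$. Adjoining $i$ makes exactly those $L$ qualify, so the restricted sum \emph{does} change. By inclusion--exclusion, the original sum equals $\card{\{V\in\Sigma: i\in V,\ j\notin V\}}$ minus the number of winning coalitions avoiding both $i$ and $j$.

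A concrete case: take $N=\{1,2,3\}$, $\MWC(\Sigma_1)=\{\{3\}\}$, $i=1$, $j=2$. The first sum is empty, so it has value $0$. Yet $\{V\in\Sigma_1: 1\in V,\ 2\notin V\}=\{\{1,3\}\}$.

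\textbf{The fix.} The second sum is shifted by the same amount, so the difference is preserved. More cleanly, skip (\ref{eq:lexdiff}) altogether. By (\ref{eq:lex}) and (\ref{eq:l1}), coalitions containing both $i$ and $j$ cancel, so the difference is $\card{X_i}-\card{X_j}$ with $X_i=\{V\in\Sigma: i\in V,\ j\notin V\}$. The set $X_i$ is precisely the union of the $j$-avoiding principal filters generated by your minimal family. Full inclusion--exclusion over that family therefore computes it, and your argument goes through. The paper's proof glosses over the same point: it also enumerates all nonempty subsets of the reduced family, including subsets that never occur in the first sum.

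\textbf{Comparison with the paper.} Your family coincides with the paper's $\mathfrak M^{+i,-j}(\Sigma)\cup\mathcal F^{+i,-j}(\Sigma)$, since minimality prevents any member of $\mathfrak M^{+i,-j}(\Sigma)$ from being discarded. The paper never bounds the size of this family in general. It evaluates it only on the middle-layer family $\{S:\card{S}=\lfloor n/2\rfloor\}$, where $\mathcal F^{+i,-j}(\Sigma)=\varnothing$, and asserts that this is the worst case. You observe instead that deleting $i$ leaves an antichain in $2^{N\setminus\{i,j\}}$, so Sperner gives the bound $\binom{n-2}{\lfloor n/2\rfloor-1}$ for every $\Sigma$. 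That is the upper bound the $O$-claim actually requires, and the paper's example complements it by showing the bound is attained.
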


The proof of Proposition \ref{prp:opt2} can be found in the Appendix.

\section{Simulations}\label{ch:simulations}

Monotonicity puts a significant constraint on the way coalitional rankings can be constructed.
Considering simple games, for example, the Dedekind number tells us that there are only $168$ different possible configurations for $n=4$, while there would be $2^{2^n} = 65{,}536$ ways to pick between $16$ coalitions in total.
Allowing for any number of equivalence classes, the gap widens more drastically: there are about
$1.7*10^{9}$ monotonic coalitional rankings, in contrast to $5.3*10^{15}$ without monotonicity (the $n$-th ordered Bell number).

While the lex-cel and $\LOne$ solutions are defined for arbitrary coalitional rankings, restricting the domain to monotonic rankings raises the question of whether these structural constraints lead to a systematic alignment of the two solutions.

In particular, we investigate whether monotonicity reduces the distinction between $\RRlex[]$ and $\RRlone[]$, potentially resulting in identical rankings over $N$. If so, the additional structural information used by the $\LOne$ solution would become redundant in monotonic settings.

To this end, we conduct simulations based on randomly generated monotonic coalitional rankings with up to three equivalence classes.
We record two types of discrepancies: \emph{tie breaks}, where lex-cel yields $i \RIlex j$ while $\LOne$ yields $i \RPlone j$, and \emph{inversions}, where $i \RPlex j$ but $j \RPlone i$.

Furthermore, we analyze how the outcomes depend on the sizes of the top equivalence classes $\Sigma_1$ and $\Sigma_2$, with $\Sigma_3 = 2^N \setminus (\Sigma_1 \cup \Sigma_2)$ containing the remaining coalitions.

\begin{figure}[t]
	\centering
	\begin{subfigure}[b]{0.9\textwidth}
		\centering
		\includegraphics[width=\textwidth]{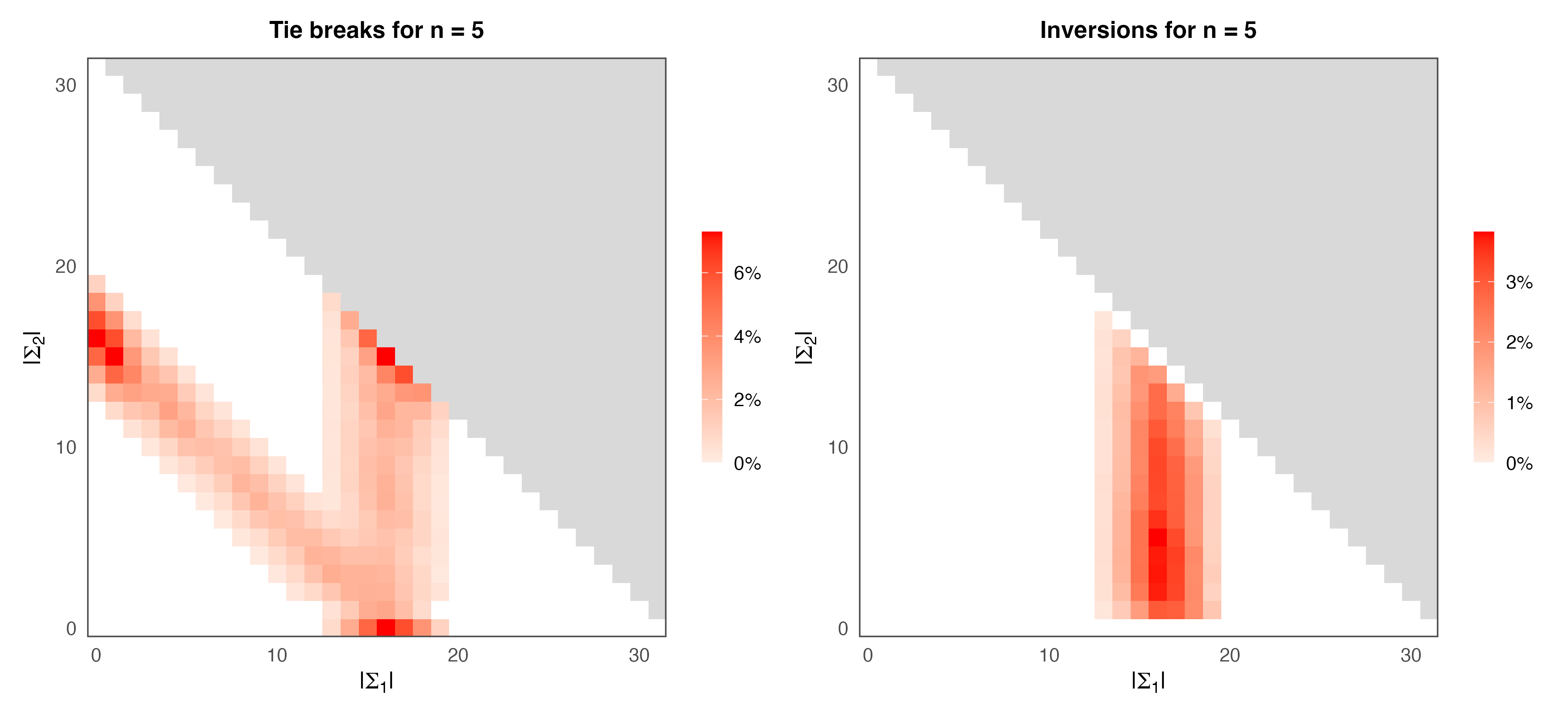}
	\end{subfigure}

	\begin{subfigure}[b]{0.9\textwidth}
		\centering
		\includegraphics[width=\textwidth]{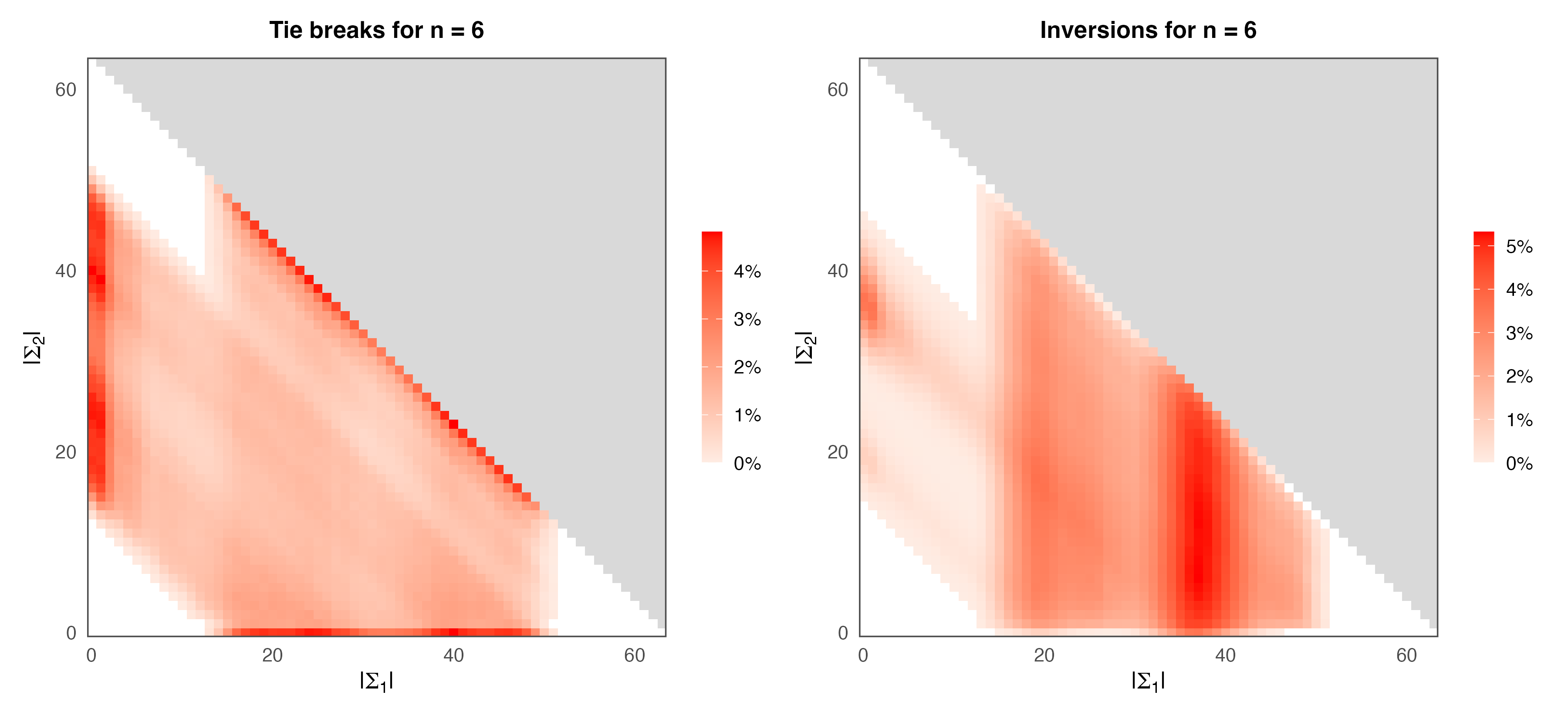}
	\end{subfigure}

	\caption{Simulations comparing the rate of tie breaks and inversions between the lex-cel and $\LOne$ ranking solutions for $n=5$ and $n=6$. Each data point represents 5000 randomly generated rankings.}
	\label{fig:simulation_grid}
\end{figure}

Figure~\ref{fig:simulation_grid} summarizes the results for $n=5$ and $n=6$.
Each point represents the proportion of pairs $(i,j)$ over 5000 coalitional rankings for which both solutions agree, i.e., either $i I j$ or $i P j$, versus cases where they differ (through tie breaks or inversions).

Several patterns emerge from the simulations. First, we observe white triangular regions at the corners of each plot, which persist across all tested values of $n$. These regions correspond to extreme distributions of coalitions among the equivalence classes. In particular, identical rankings are obtained whenever one of the following holds:

\begin{enumerate}
    \item $\card{\Sigma_1} + \card{\Sigma_2} \leq 12$,
    \item $\card{\Sigma_1} + \card{\Sigma_2} \geq 2^n - 12$,
    \item $\card{\Sigma_1} \geq 2^n - 12$.
\end{enumerate}

Notice that for $n \leq 4$, $2^n - 12 < 12$.
Through exhaustive search, we can confirm that lex-cel and $\LOne$ coincide in these smaller cases.

Second, for $n=5$, inversions do not occur at the boundaries of the plots, i.e., whenever one equivalence class is empty. Thus, inversions only arise when all three equivalence classes are non-empty, whereas tie breaks may still occur at the boundaries.
Third, tie breaks are concentrated near the boundaries of the parameter space. These correspond to cases where either $\Sigma_1$ or $\Sigma_2$ is very small, or where $\Sigma_1 \cup \Sigma_2$ contains almost all coalitions. This behavior is consistent with the independence of the worst set property: if one solution yields a decisive comparison based on higher-ranked coalitions, the other solution may reverse this comparison based on lower-ranked ones (which, in turn, do not affect the decision of the former ranking solution).

Finally, although the overall frequency of discrepancies remains below $10\%$, it increases with $n$. This suggests that the $\LOne$ solution does not become redundant in monotonic settings, as differences between the two solutions persist and become more pronounced for larger systems.

\section{Conclusion}\label{ch:conclusion}

Monotonicity is a natural property in many situations where coalitions are formed, whether in political environments, collaborative work settings, or more general object-based frameworks.
The goal of this paper was to investigate alternative methods for applying social ranking solutions within this setting.
We presented combinatorial formulas to calculate the lex-cel and $\LOne$ scores in monotonic coalitional rankings using only the corresponding sets of minimal winning coalitions.
Additionally, motivated by the substantial overlap in the axiomatic foundations of the two solutions, we investigated whether restricting the domain to monotonic rankings leads to a stronger alignment of their induced rankings.

The methods developed in this paper may also be applicable to other social ranking solutions.
At the same time, we acknowledge that the presented formulas still exhibit exponential worst-case running times when the number of minimal winning coalitions becomes large.
Thus, the main contribution of this work is not necessarily to provide computationally optimal algorithms, but rather to establish a combinatorial framework through which social ranking solutions can be studied on monotonic domains.

If this framework based on the inclusion--exclusion principle is applicable to other social ranking solutions, future work could investigate unified computational approaches capable of evaluating several social ranking solutions simultaneously by reusing these shared combinatorial computations.
Moreover, since social rankings are primarily concerned with ordinal comparisons between elements rather than exact score values, monotonic coalitional rankings could enable more efficient methods of ranking elements, possibly by introducing entirely new social ranking solutions.

As suggested by a reviewer,  another promising direction for future research is to investigate a framework focusing on coalitional rankings that are monotonic but incomplete \cite{ravier2024,Suzuki2024}, in particular regarding procedural complexity and potential algorithmic simplifications.
Finally, an axiomatic study specifically tailored to monotonic coalitional rankings could further clarify the behavioral relationships among existing social ranking solutions.


\subsubsection{\ackname} 
S. Moretti acknowledges financial support from the ANR project THEMIS (ANR-20-CE23-0018) and GATSBII (ANR-24-CE23-6645).



\bibliographystyle{splncs04}
\bibliography{refs}

\newpage

\appendix

\section{Appendix}

\subsection{Proof of Proposition~\ref{prp:opt1}}

\begin{proof}
	It was shown previously that the maximum cardinality of an antichain is $\binom{n}{\lfloor n/2 \rfloor}$.
	We construct this antichain by choosing for all coalitions of size $\lfloor n/2 \rfloor$,
	\begin{align*}
		\Sigma       & = \{S \subseteq N: \card{S} \geq \lfloor n/2 \rfloor\}, \\
		\MWC(\Sigma) & = \{S \in \Sigma\,: \card{S} = \lfloor n/2 \rfloor\}.
	\end{align*}
	Partition the set $\MWC(\Sigma)$ into two sets,
	\[
		\MWC^{+i}(\Sigma) = \{S \in \MWC(\Sigma): i \in S\}, \quad \MWC^{-i}(\Sigma) = \{S \in \MWC(\Sigma): i \notin S\}.
	\]
	Put differently, given $N \setminus \{i\}$, $\MWC^{+i}(\Sigma)$ contains all coalitions of size $\lfloor(n/2)-1\rfloor$ for $i$ to form a coalition with, while in $\MWC^{-i}(\Sigma)$ coalitions of size $\lfloor(n/2)\rfloor$ without $i$ are present.
	Their cardinality can be expressed again with the binomial coefficient,
	\[
		\card{\MWC^{+i}(\Sigma)} = \binom{n-1}{\lfloor n/2 \rfloor - 1} \leq \binom{n-1}{\lfloor n/2 \rfloor} = \card{\MWC^{-i}(\Sigma)}.
	\]
	Note that $\card{\MWC^{+i}(\Sigma)} = \card{\MWC^{-i}(\Sigma)}$ if $n$ is even.

	By definition, any $\{i\} \cup S$ for $S \in \MWC^{-i}(\Sigma)$ must be a superset of some coalition in $\MWC^{+i}(\Sigma)$.
	Therefore,
	\[
		\mathcal{F}^i(\Sigma) = \MWC^{+i}(\Sigma),
	\]
	which effectively reduces the original exponential worst-case running time to the maximum cardinality that $\mathcal{F}^i(\Sigma)$ can have,
	\[
		O\!\left(2^{\binom{n-1}{\lfloor n/2 \rfloor - 1}}\right).
	\]
\end{proof}

\subsection{Proof of Proposition~\ref{prp:opt2}}

\begin{proof}
	Fix two distinct elements $i,j \in N$ and let $\MWC(\Sigma)=\{S_1,\dots,S_m\}$.
	Partition the minimal winning coalitions according to the presence of $i$ and $j$:
	\begin{align*}
		\mathfrak M^{+i,+j}(\Sigma) & = \{S \in \MWC(\Sigma): i \in S,\ j \in S\},       \\
		\mathfrak M^{+i,-j}(\Sigma) & = \{S \in \MWC(\Sigma): i \in S,\ j \notin S\},    \\
		\mathfrak M^{-i,+j}(\Sigma) & = \{S \in \MWC(\Sigma): i \notin S,\ j \in S\},    \\
		\mathfrak M^{-i,-j}(\Sigma) & = \{S \in \MWC(\Sigma): i \notin S,\ j \notin S\}.
	\end{align*}

	By Equation~(\ref{eq:lexdiff}), only unions whose intersection with $\{i,j\}$ equals $\{i\}$ or $\{j\}$ contribute to the difference.
	Hence coalitions in $\mathfrak M^{+i,+j}(\Sigma)$ never affect the calculation, since every union containing one of them contains both $i$ and $j$.

	For the first sum in (\ref{eq:lexdiff}), only coalitions from $\mathfrak M^{+i,-j}(\Sigma)$ may contribute directly.
	As in Proposition~\ref{prp:opt1}, coalitions from $\mathfrak M^{-i,-j}(\Sigma)$ are only relevant after adjoining $i$, and only when they are not already covered by a coalition in $\mathfrak M^{+i,-j}(\Sigma)$.
	Hence it suffices to retain those sets $\{i\}\cup S$ with $S\in\mathfrak M^{-i,-j}(\Sigma)$ for which no $T\in\mathfrak M^{+i,-j}(\Sigma)$ satisfies $T\subseteq \{i\}\cup S$, and then take the inclusion-minimal members among them. Formally, define
	\[
		\mathcal F^{+i,-j}(\Sigma)
		=
		\MWC\bigl(
		\{
		\{i\}\cup S :
		S\in\mathfrak M^{-i,-j}(\Sigma),
		\nexists T\in\mathfrak M^{+i,-j}(\Sigma)\text{ with }T\subseteq \{i\}\cup S
		\}
		\bigr).
	\]
	Analogously, for the second sum it suffices to retain
	\[
		\mathcal F^{-i,+j}(\Sigma).
	\]

	Consequently, the running time is determined by the number of coalitions that can appear in the relevant antichains
	\[
		\mathfrak M^{+i,-j}(\Sigma)\cup \mathcal F^{+i,-j}(\Sigma)
	\]
	for the first sum, and symmetrically
	\[
		\mathfrak M^{-i,+j}(\Sigma)\cup \mathcal F^{-i,+j}(\Sigma)
	\]
	for the second sum.

	The worst case is attained by the extremal Sperner family
	\[
		\MWC(\Sigma)=\{S\subseteq N : \card{S}=\lfloor n/2\rfloor\}.
	\]
	In this case, every coalition in $\mathfrak M^{-i,-j}(\Sigma)$ has size $\lfloor n/2\rfloor$ and thus, after adjoining $i$ or $j$, strictly contains some coalition in $\mathfrak M^{+i,-j}(\Sigma)$ or $\mathfrak M^{-i,+j}(\Sigma)$, respectively. Therefore,
	\[
		\mathcal F^{+i,-j}(\Sigma)
		=
		\mathcal F^{-i,+j}(\Sigma)
		=
		\varnothing.
	\]

	Hence the maximal number of relevant coalitions in either sum equals
	\[
		\card{\mathfrak M^{+i,-j}(\Sigma)}
		=
		\card{\mathfrak M^{-i,+j}(\Sigma)}
		=
		\binom{n-2}{\lfloor n/2 \rfloor - 1}.
	\]
	since one chooses $\lfloor n/2 \rfloor - 1$ additional elements from $N\setminus\{i,j\}$.

	Each of the two sums in (\ref{eq:lexdiff}) therefore requires, in the worst case, enumeration of all nonempty subsets of such a family. Thus the total running time is
	\[
		O\!\left(2 \cdot 2^{\binom{n-2}{\lfloor n/2 \rfloor - 1}}\right).
	\]
\end{proof}

\end{document}